\newcommand{\ruyquote}[1]{\refstepcounter{equation} \vspace{0.15cm}%
     \parbox{13cm}{\em #1}\hspace*{0.5cm}($\theequation$)\\[0.15cm]}
\newtheorem{theorem}{Theorem}
\newtheorem{lemma}[theorem]{Lemma}
\newtheorem{corollary}[theorem]{Corollary}
\newtheorem{observation}{Observation}
\title{Filming runners with drones is hard}
\author{José-Miguel Díaz-Báñez\thanks{Department of Applied Mathematics, University of Seville, SPAIN. \texttt{dbanez@us.es}} \and Ruy Fabila-Monroy\thanks{Departamento de Matem\'aticas, Cinvestav, MEXICO. \texttt{ruyfabila@math.cinvestav.edu.mx}} }
\begin{document}

\maketitle

\begin{abstract}
The use of drones or Unmanned Aerial Vehicles (UAVs) for aerial photography and cinematography is becoming widespread.
The following optimization problem has been recently considered. Let us imagine a sporting event where a group of runners are competing and a team of drones with cameras are used to cover the event. The media \emph{director} selects a set of \emph{filming scenes} (determined by locations and time intervals) and the goal is to maximize the total \emph{filming time} (the sum of recordings) achieved by the aerial cinematographers.  Recently, it has been showed that this problem can be solved in polynomial time assuming the drones have unlimited battery endurance. In this paper, we prove that the problem is NP-hard for the more realistic case in which the battery endurance of the drones is limited. 
\end{abstract}



\section{Introduction}
\label{intro}

The rapid proliferation of commercial unmanned aerial vehicles (UAVs), commonly referred to as \emph{drones}, has had a strong impact on media production and coverage.  UAVs, which partially replace motorcycles and helicopters in the field, provide an affordable and flexible alternative for quickly capturing impressive aerial footage in a variety of scenarios, including movie and TV shoots, as well as live outdoor event coverage. Therefore, optimization problems aimed at enhancing the efficiency of UAV-based shooting represent an emerging research area with substantial potential for the industry.

There are examples of algorithms in the robotics literature for maximum coverage of targets
from appropriate viewpoints
using multiple cooperating UAVs.
For example, in \cite{nageli2017real}, an online
planning algorithm that jointly optimizes feasible trajectories and control
inputs for multiple UAVs is proposed.
Also,  a method 
to ensure that cinematographic properties and dynamic constraints are ensured along the trajectories is proposed in \cite{galvane2018directing}. Non-linear optimization is applied to generate polynomial curves with minimum
curvature variation, accounting for target visibility and collision
avoidance. The motion of multiple UAVs around dynamic targets
is coordinated by means of a centralized master–slave approach
to solve conflicts.
A novel formulation for non-linear trajectory
optimization is proposed in \cite{alcantara2021optimal}.
The method integrates the aesthetic
quality of the videos as well as the constraints imposed
by drone dynamics.

However, there has been limited effort dedicated to investigating optimal trajectory planning problems involving multiple cooperating UAVs
under
battery autonomy limitations.
In this framework, the following scheduling problem has been conjectured to be NP-hard in \cite{caraballo2020autonomous}:  Consider an outdoor event to be filmed by a set of $k$ drones equipped with cameras and limited battery endurance.
The media director specifies a set of shooting actions (scenes) determined by waypoints and time intervals during which the drones should film the moving targets (e.g., cyclists, runners, etc.).  
A scene, or part of it, can be filmed by one or multiple drones, meaning a drone may film the first part of a scene, while the rest may be filmed by a different one. 
The objective is to compute a schedule for the team that maximizes the total filming time.
There is one base station from which the drones start and return to recharge their batteries at any time. Battery recharging is assumed to be instantaneous. It is also assumed that a collision-free path can be computed between any pair of locations, and given this path, the required travel time and battery consumption can be estimated. Consequently, it can be determined at any moment whether a drone has enough battery to return to the base.
The authors in  \cite{caraballo2020autonomous} proposed a discretized version of the problem, which enables the approximation of the above optimization problem in polynomial time. They assumed a discrete set of possible entry and exit points within the time intervals. Their approach is based on constructing a directed acyclic graph, with vertices representing pairs $(p, t)$, where $p$ denotes a position and $t$ corresponds to an associated instant of time. Discretization was performed based on time rather than distance to align with the application's goal of maximizing filming time, thereby focusing on the total number of discrete time segments covered. The authors introduced a graph-based algorithm relying on dynamic programming to find an optimal solution for the discrete problem concerning a single drone in polynomial time. A greedy strategy was subsequently employed to address the problem for multiple drones sequentially.



Several variants of this challenging problem
in cinematography have been recently considered.
A non-linear formulation recently proposed in \cite{alcantara2021optimal}.  Unlike \cite{caraballo2020autonomous}, they assume that the director has designed a mission comprising several shots, with a specific shot pre-assigned to each UAV.  Their objective is to plan trajectories in order to execute
all shots online in a coordinated manner by a small team of drones.

On the other hand, in  \cite{aichholzer2020scheduling} it has been proved that the mentioned scheduling problem can be solved in polynomial time when the battery endurance is considered unlimited, that is, the drones do not require to recharge their batteries. The authors were able to translate the problem to the problem of finding a maximum weight matching in a weighted bipartite graph.
They also showed how to efficiently solve the problem for one drone with limited battery endurance.
In this work, we prove the NP-hardness of the problem for a team of drones with limited battery endurance.

\section{Problem statement}
\label{problem}
Consider a scenario in which an outdoor event is going to be filmed by a set of drones. A cinematography director specifies certain locations and time intervals at
which the filming of certain scenes is desired. This is represented as 
 a set of $n$ tuples \[\mathcal{F}:=\left \{(p_i,I_i):1 \le i \le n \right \},\] where $p_i$ is a point in the plane
and $I_i$ is a time interval. $\mathcal{F}$ is called the \emph{film plan}, and each $(p_i,I_i)$  is called 
a \emph{scene}. A scene $(p_i,I_i)$, or a portion of it, can be filmed by a
drone located at $p_i$ during (part of) the time interval $I_i$. 
Each drone flies at unit speed  and has a limited \emph{battery endurance} of $L$. 
 All drones start at a point  $p_0$ in the plane called the \emph{base}. 
Upon returning to the base, we assume that the drone swaps instantaneously its battery for a fully charged one; 
while at the base, the drone does not consume its battery.


A \emph{flight path} for drone $i$ is a sequence of tuples $(p_{i(0)}, J_{i(0)}), \dots, (p_{i(m)}, J_{i(m)})$ with:
\begin{itemize}
\item $0 \le i(j) \le n$ for all $0 \le j \le m$;
\item $i(0) = i(m) = 0$; and
\item the $J_{i(j)}$ are time intervals with $J_{i(j)} \subset I_{i(j)}$, if $i(j) \neq 0$.
\end{itemize}
A flight path represents a schedule that a drone might follow. A flight path is \emph{realizable} if:
\begin{itemize}
\item for every $0 \le j \le m-1$, the drone can travel from $p_{i(j)}$ at the end of interval $J_{i(j)}$ to $p_{i(j+1)}$ at the beginning of interval $J_{i(j+1)}$; and
\item the drone is never away from the base for a time interval longer than $L$.
\end{itemize}

 A \emph{flight plan} $F$ is a set of realizable flight paths. Its \emph{filming time} is defined as 
 \[\sum_{i=1}^{n} \left |\bigcup_{\substack{P \in F \\ (p_i,J_i) \in P }} \left(I_i \cap J_i \right )\right |.\]
Thus, the filming time of $F$ is the total time of the film plan that can be captured by $|F|$ drones, each implementing
one of the flight paths in $F$. 

In this paper we consider the following decision problem.
\bigbreak
\noindent \textbf{Filming Time with Limited  Battery (FTLB) problem}
\begin{quote} \emph{Given $(\mathcal{F},k,L,T)$, does there exists a flight plan for $\mathcal{F}$, with $k$ drones, each with
battery endurance equal to $L$, of filming time at least $T$?}\end{quote}
Our main result is the following.
\begin{theorem} \label{thm:main}
The FTLB problem is NP-Complete.
\end{theorem}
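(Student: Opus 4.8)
The plan is to prove NP-completeness in two parts. Membership in NP is the easy part: given a flight plan, one can guess for each of the $k$ drones a polynomial-size description of its flight path (a sequence of scenes together with rational entry/exit times, which by a standard exchange argument can be taken to have polynomial bit-complexity), then verify in polynomial time that each path is realizable (checking travel-time feasibility between consecutive scenes and the battery constraint on each excursion from the base) and that the total filming time is at least $T$. The bulk of the work is the hardness reduction.

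For NP-hardness I would reduce from a known strongly NP-hard scheduling or packing problem — my first candidate is \textsc{3-Partition}, since the battery endurance $L$ naturally plays the role of a bin capacity: each excursion of a drone from the base lasts at most $L$, so an excursion is like a bin, and the pieces of scenes it can film are like items that must be packed into that time budget (including the travel overhead to fly out and back). Concretely, given a \textsc{3-Partition} instance with $3m$ integers $a_1,\dots,a_{3m}$ summing to $mB$, I would place $3m$ ``item scenes'' at carefully chosen points in the plane, with time intervals arranged so that filming item $j$ fully consumes roughly $a_j$ units of a drone's battery (the geometry forces a drone to spend travel time proportional to $a_j$ getting to and from that scene, or the interval has length $a_j$ at a location far from the base). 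I would use $k=m$ drones, set $L$ so that a single excursion can accommodate exactly three items iff their values sum to exactly $B$, and choose $T$ to equal the total filming time achievable only when every drone's every excursion is perfectly packed into triples. An alternative, possibly cleaner, source problem is a one-machine-per-drone interval-scheduling variant or \textsc{Bin Packing}; but \textsc{3-Partition} is attractive because its strong NP-hardness lets me use unary-encoded values directly as geometric distances or interval lengths without worrying about exponential blow-up.

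The hard part will be controlling the geometry and timing so that the ``obvious'' packing is the only way to achieve filming time $T$ — in particular, ruling out cleverer schedules in which a scene is filmed partially by one drone and partially by another, or in which a drone interleaves scenes from different intended excursions, or returns to base mid-interval to reset its battery and thereby ``cheats'' the capacity constraint. I expect to need the scenes' time intervals to be pairwise disjoint or nested in a rigid pattern (e.g., all item intervals for a given ``slot'' occurring simultaneously, so a drone can film at most one of them during that slot), and to place the base far enough away, or use a large common additive travel cost, so that each round trip to film even one item already eats a fixed chunk of $L$; this makes the number of items per excursion tightly bounded and converts the continuous filming-time objective into the discrete combinatorial count that \textsc{3-Partition} needs. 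A secondary technical point is confirming that allowing a scene to be split across drones (as the problem explicitly permits) does not give extra power in the constructed instance — I would arrange the intervals so that every scene lies entirely within one time slot during which any drone that films it cannot do anything else, making splitting useless.

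Once the construction is fixed, the two directions are routine: a valid \textsc{3-Partition} yields a flight plan of filming time exactly $T$ by sending drone $\ell$ out on $m$ excursions, one per part of its assigned triple-structure; conversely, a flight plan of filming time $\ge T$ must, by the capacity/geometry bounds, film every item scene in full and group them into excursions of total value exactly $B$, which reads off a \textsc{3-Partition}. Combined with membership in NP, this proves Theorem~\ref{thm:main}.
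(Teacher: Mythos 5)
Your NP-membership argument is fine and matches the paper's (one-line) treatment. The hardness half, however, has a genuine gap: it is a plan rather than a construction, and the plan runs into a structural obstacle that you name but do not resolve. For a \textsc{3-Partition} reduction to work, the solver must be free to group the $3m$ item scenes into arbitrary triples, each filmed within one battery-bounded excursion. But the time intervals of the scenes are fixed by the instance you construct. If you make each item's interval short and rigid (your proposed fix for preventing split or partial filming), then which items can be filmed together in one excursion is already determined by the clock, not chosen by the solver, and the partition is forced rather than encoded. If instead you make the intervals long and flexible so that any triple can be scheduled sequentially, then partial filming, splitting a scene across two drones, and returning to the base mid-excursion to reset the battery all become genuinely available, and the continuous filming-time objective no longer certifies an exact packing. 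You flag exactly this tension as ``the hard part'' but offer no mechanism to escape it; encoding $a_j$ as travel distance instead of interval length does not obviously help either, since planar distances obey the triangle inequality and the cost of visiting three items in one excursion is not the sum of three independent costs. (There is also a small arithmetic slip: with $k=m$ drones and $m$ triples you want one excursion per drone, not ``$m$ excursions'' per drone.)

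The paper avoids this tension by reducing from 3-SAT (following the disjoint-bounded-length-paths reduction of Itai et al.) rather than from a packing problem. There the combinatorial choice is \emph{spatial}, not temporal: every scene's time interval is completely fixed, and a drone expresses a truth assignment by choosing \emph{which location} (the $x_{ij}$ or the $\overline{x}_{ij}$ vertex of a variable gadget) it flies to between two fixed-time scenes. Cheating is excluded by an explicit penalty analysis: collision zones, breadcrumb ``travelling'' scenes, and parking scenes are sized so that any deviation from the intended flight plan loses more filming time than it can gain. Something of comparable intricacy would be needed to make your reduction rigorous, and as written the central ``only if'' direction is not established.
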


\section{NP-hardness}

The FTLB problem is clearly in NP.
In this section we prove that the  FTLB problem is NP-Complete by a reduction from 3-SAT.
In~\cite{itai}, the authors show that the following problem is NP-Complete. 
\begin{quote} 
\emph{Given a graph $G$ with two distinct vertices $s$ and $t$, 
find the maximum number of interior vertex disjoint paths of length at most five between $s$ and $t$.} 
\end{quote}
Our reduction
was inspired by the reduction of~\cite{itai}. However, the geometric nature of our setting makes our construction much more elaborated. 

Given  an instance $\varphi$ of 3-SAT, with variables $x_1,\dots,x_n$,
 an instance $\varphi'$ of 3-SAT, on the same variables, can be constructed in polynomial time,
such that the number of times that $x_i$ appears on $\varphi'$ is equal to the number of times
$\overline{x_{i}}$ appears on $\varphi'$(see~\cite{itai}). Moreover, $\varphi$ is satisfiable if and only if $\varphi'$
is satisfiable. 
In what follows, let $\varphi$ be an instance of 3-SAT with variables $x_1,\dots,x_n$, clauses $C_1,\dots,C_l$, and in which $x_i$ and $\overline{x_i}$ each appear  exactly
$m_i$ times; let \[m:=\sum_{i=1}^n m_i. \] 
If necessary, we duplicate every clause at most two times so that $m_i \ge 3$ for all $i$.
We construct in polynomial time an instance $(\mathcal{F},k,L,T)$ of the FTLB problem such that:

\begin{lemma}\label{th:satisf}
$\varphi$ is satisfiable
if and only if there is a flight plan for $\mathcal{F}$,  with $k$ drones, each with battery endurance equal to $L$, of filming time at least $T$.
\end{lemma}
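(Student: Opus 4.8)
The plan is to realize, inside the geometric scheduling setting, the combinatorial skeleton used in the hardness proof of \cite{itai} for short interior-vertex-disjoint $s$--$t$ paths. The base $p_0$ plays the role of both $s$ and $t$: every flight path begins and ends at $p_0$, so a realizable flight path is essentially a closed walk through the base, and the battery endurance $L$ together with the unit-speed assumption turns the bound ``length at most five'' into ``total time away from the base at most $L$''. First I would lay the scenes out in a constant number of \emph{time layers}: the intervals $I_i$ are chosen to be (almost) pairwise disjoint and ordered in time, so that to film inside a scene of layer $j$ a drone must be at the corresponding location during the $j$-th time window; combined with the triangle inequality for travel times, this forces any drone that does a positive amount of useful filming to visit at most a constant number of scenes, one per layer, i.e.\ to follow a ``canonical'' route of bounded combinatorial length. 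Choosing the inter-scene distances and the interval endpoints so that these constraints are simultaneously consistent — separations large enough that any illegal shortcut overshoots the battery budget, yet small enough that the intended routes remain realizable — is the first technical hurdle.

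Next I would build the gadgets. For each variable $x_i$ (recall that, after passing to $\varphi'$, the literals $x_i$ and $\overline{x_i}$ occur the same number $m_i \ge 3$ of times) a \emph{variable gadget} consists of two bundles of scenes, a ``true'' bundle and a ``false'' bundle, arranged so that a fixed pool of drones assigned to $x_i$ can completely film the scenes of exactly one of the two bundles; the balance $m_i = m_i$ is precisely what lets the same pool cover either side, and is where the preprocessing borrowed from \cite{itai} is used. A drone that has committed to the true (resp.\ false) bundle then has just enough residual battery to continue into a \emph{clause gadget} of a clause containing the literal $x_i$ (resp.\ $\overline{x_i}$) and film its scene. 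Each clause gadget $C_j$ contains one scene (or a short chain of scenes) whose whole interval can be covered only by a drone that arrives ``through'' one of the three literals of $C_j$. Finally I set $k$ to the number of drones needed to saturate all variable bundles and all clause gadgets, and set $T$ to be the filming time obtained when every variable-gadget scene and every clause scene is filmed in full.

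For the forward direction, from a satisfying assignment of $\varphi$ I route the drones gadget by gadget: for each variable, send its pool through the bundle prescribed by the assignment; for each clause, pick one literal satisfied by the assignment and let a drone continue from that literal's bundle into the clause gadget. One checks that each resulting flight path is realizable (it stays within $L$ and meets all travel-time constraints, by construction) and that, collectively, they film every scene, yielding filming time exactly $T$. The converse is the core of the proof and the main obstacle. Given a flight plan of filming time at least $T$, I must argue it is essentially canonical: since $T$ already equals the total length of \emph{all} scenes and the intervals lie in disjoint layers, no scene can be over-covered, so every scene is filmed (almost) entirely, and no drone's battery allows it to deviate from an intended route. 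Hence each variable gadget is filmed from a single bundle — this defines a consistent truth value for $x_i$ — and each clause gadget, being filmed, certifies that at least one of its literals carried the drone that filmed it, i.e.\ is true; so $\varphi$ is satisfiable.

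The delicate points to pin down in the converse are exactly the ones created by the geometric formulation. First, the \emph{union} in the objective must not allow several drones to each grab a different fraction of many scenes and so fake the total $T$; this is prevented by making the layers and the inter-scene distances coarse enough that a drone filming part of a scene in layer $j$ cannot also reach layer $j+1$ in time unless it travels the canonical route. Second, a drone must not be able to spend battery helping two mutually inconsistent literals or clauses, so that the extracted assignment is well defined; this is enforced by the residual-battery bookkeeping in the variable-to-clause connections. Polynomiality is immediate: every gadget has constant size, there are $O(m+l)$ of them, and $k$, $L$, $T$ are polynomially bounded integers, so $(\mathcal{F},k,L,T)$ is built in polynomial time, which with membership in NP completes the proof of Theorem~\ref{thm:main}.
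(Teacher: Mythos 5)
Your high-level plan matches the paper's strategy --- a reduction from 3-SAT inspired by the short-disjoint-paths reduction of Itai et al., with variable gadgets encoding a binary choice, clause gadgets reachable only through a satisfied literal, and timing/battery constraints forcing canonical routes --- but as written it is a plan rather than a proof, and the steps you defer are exactly where the difficulty lives. The central missing idea is the mechanism that simulates \emph{edges} in a setting that has none: the paper places chains of tiny ``travelling'' scenes (breadcrumbs) along each literal-to-clause segment together with a ``collision zone'' near the inner circle, and then proves quantitatively (via the constraints relating $\varepsilon_1$, $\varepsilon_2$, $R_1-R_2$ and $R_3-R_2$, e.g.\ $\frac{l\,d^\ast}{R_1-R_2}\varepsilon_2<\varepsilon_1/l$) that a drone cannot profitably hop between different segments or harvest fractions of several chains. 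Nothing in your sketch rules this out, and because the objective is a union of filmed portions, this ``partial-credit'' cheating is the main threat. Likewise, the consistency of the extracted truth assignment is not automatic: the paper enforces it through the alternating $Y$- and $W$-vertices, the equidistance conditions inside each variable gadget, and the fact that $Y$-scenes can only be reached in time by the selector drones; your two ``bundles'' have no stated mechanism preventing a mixed choice.

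Your converse argument is also internally inconsistent. You design the variable gadget so that only \emph{one} of the two bundles is filmed, yet you then argue that ``$T$ already equals the total length of all scenes \dots\ so every scene is filmed (almost) entirely.'' Both cannot hold. In the paper $T$ is strictly below the total scene time (only $m+l$ of the $2m$ literal scenes are filmed, and the literal parking scenes are filmed only partially), so the converse cannot proceed by saturation; instead it is a stage-by-stage accounting: upper bounds on the achievable filming time in each of four time stages, plus explicit penalties --- each exceeding the slack $T'-T$ --- incurred by any non-canonical move (a selector drone returning to base early, a drone attempting to film a clause scene without the prescribed itinerary, etc.). Without that quantitative bookkeeping, and without concrete values of $k$, $L$, $T$ and the scene intervals against which to verify realizability and the battery budget in the forward direction, the equivalence is asserted rather than proved.
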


\noindent Theorem~\ref{thm:main} follows directly from Lemma~\ref{th:satisf}. The rest of the paper
is devoted to prove Lemma~\ref{th:satisf} 

\subsection{Construction}
Throughout our construction we define various parameters; afterwards, we give lower or upper bounds on these parameters so that certain conditions are met.
These conditions enable us to prove Lemma~\ref{th:satisf}

  \subsection*{ Variable Gadgets}
  
  For every variable $x_i$ we construct a gadget as follows. Let $S_1$ and $S_2$ be two
 concentric circles of radii $r_1$ and $r_2$, respectively, with $r_1 > r_2$.
 Let $x_{i1}, \overline{x}_{i1},x_{i2}, \overline{x}_{i2},\dots,x_{im_i}, \overline{x}_{im_i}$ be the vertices,
 in clockwise order, of a  regular polygon of $2m_i$ sides on $S_2$.
  The vertices $x_{ij}$ and  $\overline{x}_{ij}$ correspond to the $j$-th time that literals $x_i$ and $\overline{x_i}$ appear on $\varphi$, respectively.
  We call these vertices, \emph{literal vertices}.
 For every $j=1,\dots,m_i$, place on $S_1$ a vertex $y_{ij}$ equidistant to $x_{ij}$ and $\overline{x}_{ij}$.
 We call these vertices, \emph{$Y$-vertices}.
 For every $j=1,\dots,m_i$, place on $S_1$ a vertex $w_{ij}$ equidistant to $\overline{x_{ij}}$ and $x_{i(j+1)\mod m_i}$. 
 We call these vertices, \emph{$W$-vertices}.
 See Figure~\ref{fig:cycling_gadgets}, zoom. Suppose for the time being that $r_2$ is such that the $x_{ij}$ and $\overline{x_{ij}}$
 vertices
 lie at the midpoint of the edges of the regular polygon with vertices the $y_{ij}$ vertices. Since we are assuming that this
 polygon has at least $2m_i\ge 6$  vertices, the distance from $x_{ij}$ to $ \overline{x_{ij}}$ is at least
 $2\cos(\pi/6))=\sqrt{3}$ the distance from $x_{ij}$ (and $\overline{x_{ij}}$) to $y_{ij}$.
 Thus, we can choose $r_2$ so that
 \begin{equation}
  d(x_{ij},\overline{x_{ij}})=\frac{3}{2}d(x_{ij},y_{ij})=\frac{3}{2}d(\overline{x_{ij}},y_{ij}), \label{q:delta}
 \end{equation}
 These gadgets are called \emph{variable gadgets}.
 
 \begin{figure}
 	\centering
 	\includegraphics[width=.7\textwidth]{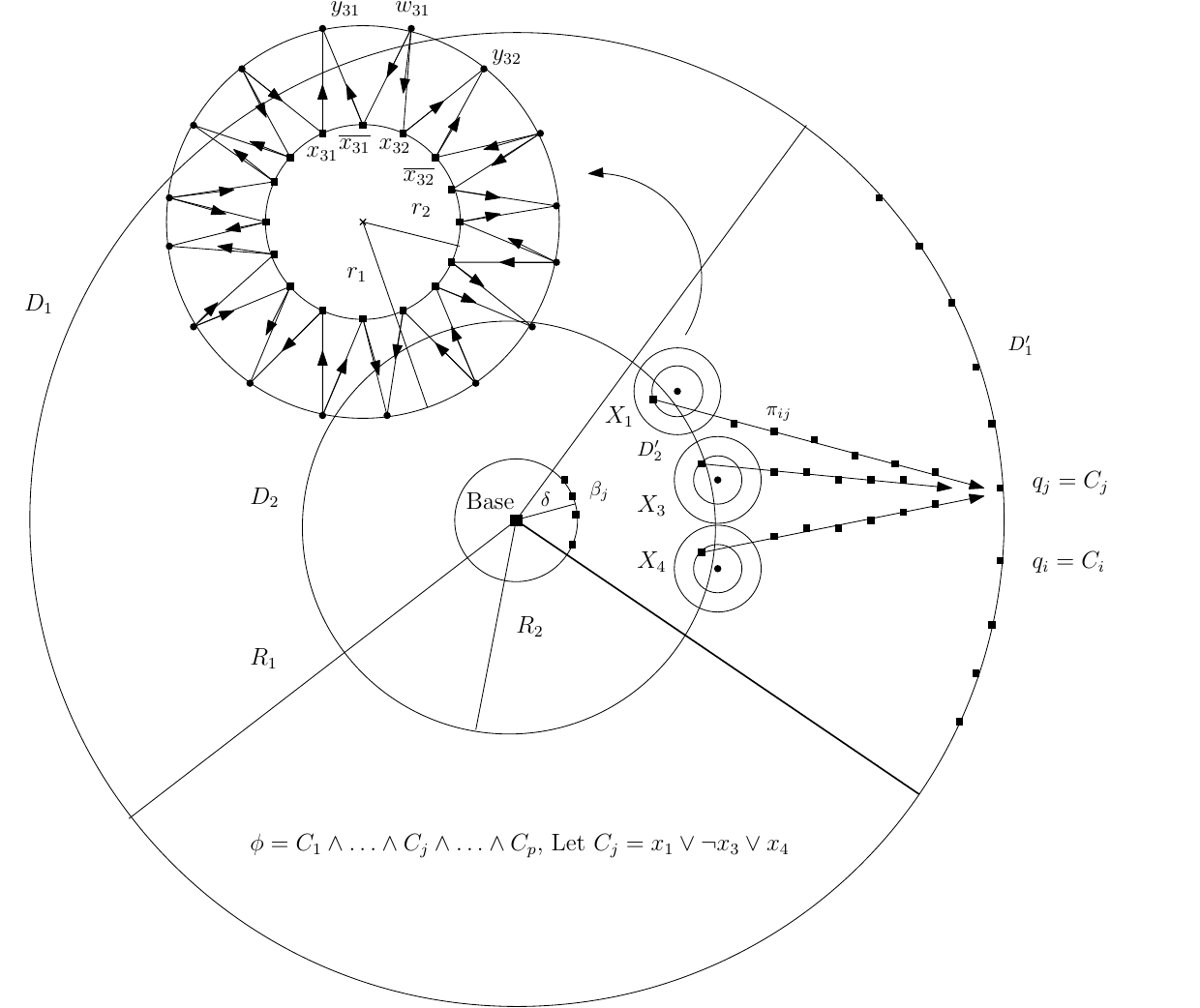}
 	\caption{Construction. The clause vertices $q_j$ are located at the exterior circle $D_1$ inside the arc $D_1'$. The gadgets are located at the inner circle $D_2$ inside the arc $D_2'$. The three directed edges correspond to the literals appearing in the clause $C_j$. Zoom: Variable Gadget. Auxiliaries vertices $y_{ij}$ and $w_{ij}$ are located in an alternating way in the gadget. }\label{fig:cycling_gadgets}
 \end{figure}
 
  \begin{figure}
 	\centering
 	\includegraphics[width=.7\textwidth]{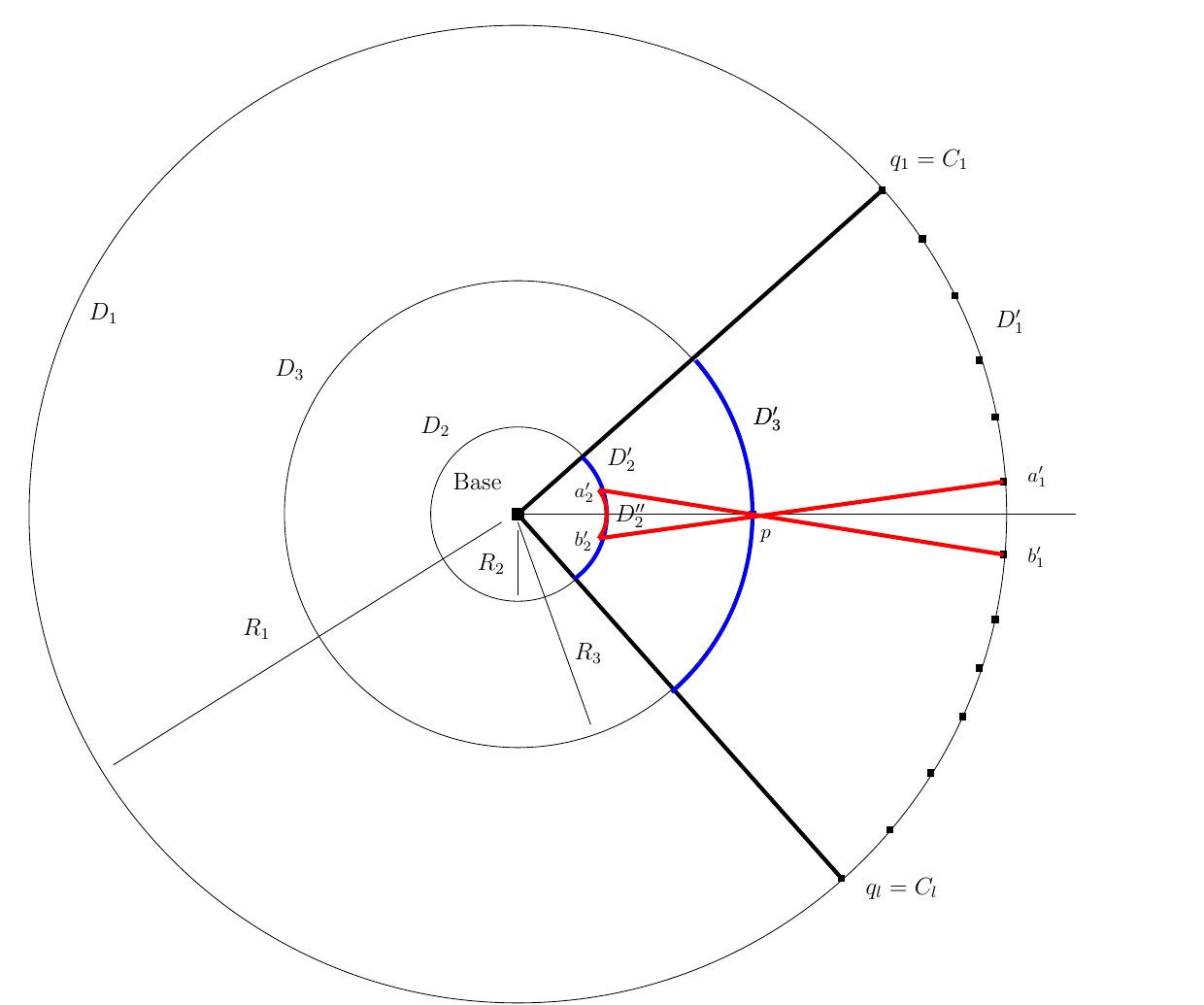}
 	\caption{The collision zone is the  blue region of the annulus. The variable vertices are located at $D_2''.$ The clause vertices are located at $D_1'$. Two intersecting segments connecting variable and clause vertices intersect inside the collision zone. }\label{fig:collision_zone}
 \end{figure}

 \subsection*{ Construction. Scene locations} 

Giving an instance of the 3-SAT, in this section we design the film plan. Unfortunately, in our problem we do not have edges between locations. To go around this issue, we carefully
place filming intervals (scenes) to force the drones to follow a set of fictitious paths. The construction is illustrated in Figures \ref{fig:cycling_gadgets} and  \ref{fig:collision_zone}.

We assume that the base is located at the origin of the coordinate system. Let $D_1$ and $D_2$ be two concentric circles centered at the origin. Let $R_1 > R_2$ be the radii of
$D_1$ and $D_2$, respectively. Let $a_1:=(R_1 \cos(-\pi/4), R_1 \sin(-\pi/4))$ and 
$b_1:=(R_1 \cos(\pi/4), R_1 \sin(\pi/4))$ be points on $D_1$. Let $D_1'$ be the circular arc in $D_1$ from
$a_1$ to $b_1$. Starting at $a_1$ and ending at $b_1$, place $q_1,\dots,q_l$, equally spaced vertices along $D_1'$. 
The distance between two such consecutive points is equal to \[\frac{\pi R_1}{2(l-1)}.\]
We call these vertices \emph{clause vertices}. For  $ 1\leq i\leq l$, vertex $q_i$ corresponds to the clause $C_i$. 

Let $a_2:=(R_2 \cos(-\pi/4), R_2 \sin(-\pi/4))$ and 
$b_2:=(R_2 \cos(\pi/4), R_2 \sin(\pi/4))$ be points on $D_2$.
Let $D_2'$ be the circular arc  in $D_2$ from
$a_2$ to $b_2$. Choose $R_1$ sufficiently large with respect to $R_2$, so that 

\ruyquote{the interior of any
line segment joining a point of $D_2'$ with a point of $D_1'$ does not intersect 
$D_2$. \label{q:no_int}}
Let $s$ be a line segment joining a point of $D_2'$ with a point of $D_1'$. Note that 
the length of $s$ is lower bounded by 
\begin{equation} \label{q:lower bound d}
    R_1-R_2.
\end{equation}
We choose $R_1$ sufficiently large with respect
to $R_2$ such that 

\ruyquote{ the length of $s$ is less than $R_1+r_1$. \label{q:upper bound d}}
\noindent Thus,
the absolute value of the difference in lengths of two such segments is less than $R_2+r_1$.
Choose $R_1$ sufficiently large with respect to $R_2$, so that 
\begin{equation}
 \frac{\pi R_1}{2(l-1)}\ge 3 R_2. \label{q:clauses_separated}
\end{equation}

Let $a_1':=(R_1 \cos(-\pi/(4(l-1)), R_1 \sin(-\pi/4(l-1)))$ and 
$b_1':=(R_1 \cos(\pi/(4(l-1))), R_1 \sin(\pi/(4(l-1)))$ be two points on $D_1$.
Let $D_1''$ be the circular arc in $D_1$ from
$a_1'$ to $b_1'$.
Let $R_3 > R_2$, such that
\begin{equation}\label{R2}
 R_3-R_2 \le \frac{R_2}{6(l-1).}
\end{equation}
Let $p:=(R_3,0)$. Let $\ell_1$ be the straight line passing through $a_1'$ and $p$;
and let $\ell_2$ be the straight line passing through $b_1'$ and $p$. Choose
$R_1$ sufficiently large with respect to $R_2$, so that $\ell_1$ and $\ell_2$
intersect $D_2'$. Let $b_2'$ be the point of intersection of $\ell_1$ and $D_2'$;
and let $a_2'$ be the point of intersection of $\ell_2$ and $D_2'$.
Let $D_3$ be the circle centered a the origin with radius $R_3$.
We refer to the region in the angular region of $D_1'$ bounded by the annulus between $D_2$ and $D_3$ as the
\emph{collision zone}. Let $D_2''$ be a circular arc, contained in the arc in $D_2$,
from $a_2'$ to $b_2'$. See Figure \ref{fig:collision_zone}.
We have that 

\ruyquote{every pair of crossing line segments, each joining a point of $D_2''$
with a clause point, intersect in the collision zone. }
Choose $R_1$ sufficiently large with respect to $R_2$, and $D_2''$ sufficiently small so that

\ruyquote{a line passing through a point of $D_2'' $ and a clause vertex, intersects
the collision zone in a line segment of length at most $2(R_3-R_2)$. \label{q:long edges}}

Starting at $a_2'$ and ending at $b_2'$ place $p_1, \dots, p_{n}$,
equally spaced vertices along $D_2''$. For each $i=1,\dots,n$, centered at each $p_i$ place a variable 
gadget for $x_i$. We choose the radius $r_1$'s of each of these gadgets so that
for all $1 \le i \le n$, $1 \le j \le m_i$,
we have that

\ruyquote{
\[d(x_{ij},y_{ij})=d(x_{i(j+1)},w_{ij})=d(\overline{x_{ij}},y_{ij})=d(\overline{x_{ij}},w_{ij})= \delta,\]
\label{q:delta}}
for some constant $\delta >0$. Choose the $r_1$'s sufficiently small, so that 

\ruyquote{the distance between
two vertices in distinct variable gadgets is at least
 $2 \delta.$ \label{eq:varg_faraway}}
\noindent 
Simple geometric arguments show that 

\ruyquote{ the distance from a $Y$-vertex or a $W$-vertex to a literal vertex,
other than one of its two closest such points, is at greater than $2\delta$. \label{q:Y faraway}}

If necessary, rotate the variable gadgets so that the union of the sets of literal and clause vertices is
in general position.
Choose the $r_1$'s sufficiently small with respect to $R_1$ so that 

\ruyquote{every pair of crossing line segments, each joining a vertex
of a variable gadget with a clause vertex, intersect in the collision zone. \label{q:col_zone}}

Now, for every $1 \le j \le l$, in the directed line from the base to $q_j$, we place
a vertex $\beta_j$ at distance equal to $\delta$ from the base. These vertices are called \emph{base parking}
vertices.

Let $t$ be a positive integer (appropriate lower bounds on $t$ are given below). 
For every $j=1,\dots,l$, and every literal $z_1, z_2, z_3$ appearing
on $C_j$, we proceed as follows. 
Let $\ell$ be the straight line segment from $z_i$ to $q_j$. 
Starting at $z_i$ and ending at $q_j$, along $\ell$,  we place $t$  consecutive and equally spaced vertices $\pi_{ij}$; 
we call them \emph{breadcrumb} vertices. See locations of vertices $\beta_j$ and $\pi_{ij}$ in Figure \ref{fig:cycling_gadgets}.
Thus, the distance between two consecutive breadcrumb vertices is equal to $d(q_j,z_i)/(t-1)$. 


 \subsection*{Scenes schedule}
 
The time intervals of the scenes are defined as follows.
See Figure \ref{fig:Filming} for an illustration of the location and time of the scenes. Let $\varepsilon_1 < \varepsilon_2 < \delta$. 
We choose $R_1$ sufficiently large with respect to $R_2$ so that 
\begin{equation}
   \frac{3(R_3-R_2)}{R_1-R_2}\varepsilon_2 < \varepsilon_1. \label{eq:epsilon}
\end{equation}
\begin{itemize}
  \item[$a)$] At time $s_0:=R_2+r_1$ a scene of length $R_1$ starts at each of the $W$-vertices.
  These scenes end at time $t_0$. We call them \emph{W-scenes}.

  \item[$b)$]  At time $s_0$ a scene of length $R_1-(R_2+r_1+\delta)$ starts at each of the base parking vertices.
   We call them \emph{base scenes}.

    \item[$c)$]  At time $s_1:=t_0+\delta$ a scene of length $R_1$ starts at each of the literal
    vertices.  These scenes end at time $t_1$.  We call them  \emph{literal scenes}.
    
    \item[$d)$]  At time $s_2:=t_1+\delta$ a scene of length $R_1$ starts at each of the $Y$-vertices.
    These scenes end at time $t_3$.  We call them \emph{Y-scenes}.
    
  \end{itemize}
  
  Let $s_1':=t_1$; inductively for $i=2,\dots, l$,  let $s_i':=s_{i-1}'+3(R_3-R_2)$.
  
  \begin{itemize}
   \item[$e)$] For every $i=1,\dots,l-1$, starting at time $s_i'$ and ending at $s_{i+1}'$, there are $t$
   consecutive scenes, each of length $\varepsilon_1/t$, located at each of the \emph{indexed} literal vertices \emph{not} appearing
   on $C_i$. The time between the ending and the beginning of two of these consecutive scenes
   is the same for every such pair. Note that the total time of these scenes at a each of these vertices, between times
   $s_i'$ and $s_{i+1}'$, is equal to $\varepsilon_1$. These scenes are called \emph{literal parking scenes}. Let $t_2$ be the finishing
   time of the last of all these parking scenes. Thus, using (\ref{R2}) we have: 
   \begin{equation}
 t_2= t_1+3(l-1)(R_3-R_2) \le t_1+R_2/2.
   \label{eq:t2}
\end{equation}
   \item[$f)$] For  every $i=1,\dots,l$, and every \emph{indexed} literal vertex $z$ appearing
   in $C_i$, we place one scene of length $\varepsilon_2/t$ at each of the breadcrumb vertices in the line from 
   $z$ to $q_i$ as follows. Let $z=z_1, \dots, z_t=q_i$ be these vertices, in the order as they appear on the directed line from $z$ to $q_i$.
   The first scene is placed on vertex $z_1$ and starts at $s_i'$.
   Inductively for $j=2,\dots, t-1$ the $j$-th scene is placed at $z_j$
   and starts $d(z,q_i)/(t-1)$ time after the $(j-1)$-scene has ended.  Thus, the total time of these
   scenes for each such $z$ is equal to $\varepsilon_2$. We call these scenes \emph{travelling scenes}.

\item[$g)$] At time $s_3:=t_2+R_1+r_1+\varepsilon_2$ a scene of length $\frac{R_2}{2}-2r_1-2\delta-\varepsilon_2$ starts at each of the clause vertices. We call these scenes
   $\emph{clause scenes}$.
\end{itemize}

By (\ref{q:col_zone}), we can choose $t$  sufficiently large, and $R_1$ sufficiently large with respect to $t$, so that

\ruyquote{ during a time interval of length $\varepsilon_2/t$ a drone cannot partially
film two scenes from the set of travelling scenes union literal parking scenes. \label{q:one_scene}}
   

We are ready now to prove Lemma \ref{th:satisf}. 
We first show
the more easy direction of the proof
, which will also give the fundamental ideas
of how the reduction works. 

  \begin{figure}
 	\centering
 	\includegraphics[width=.8\textwidth]{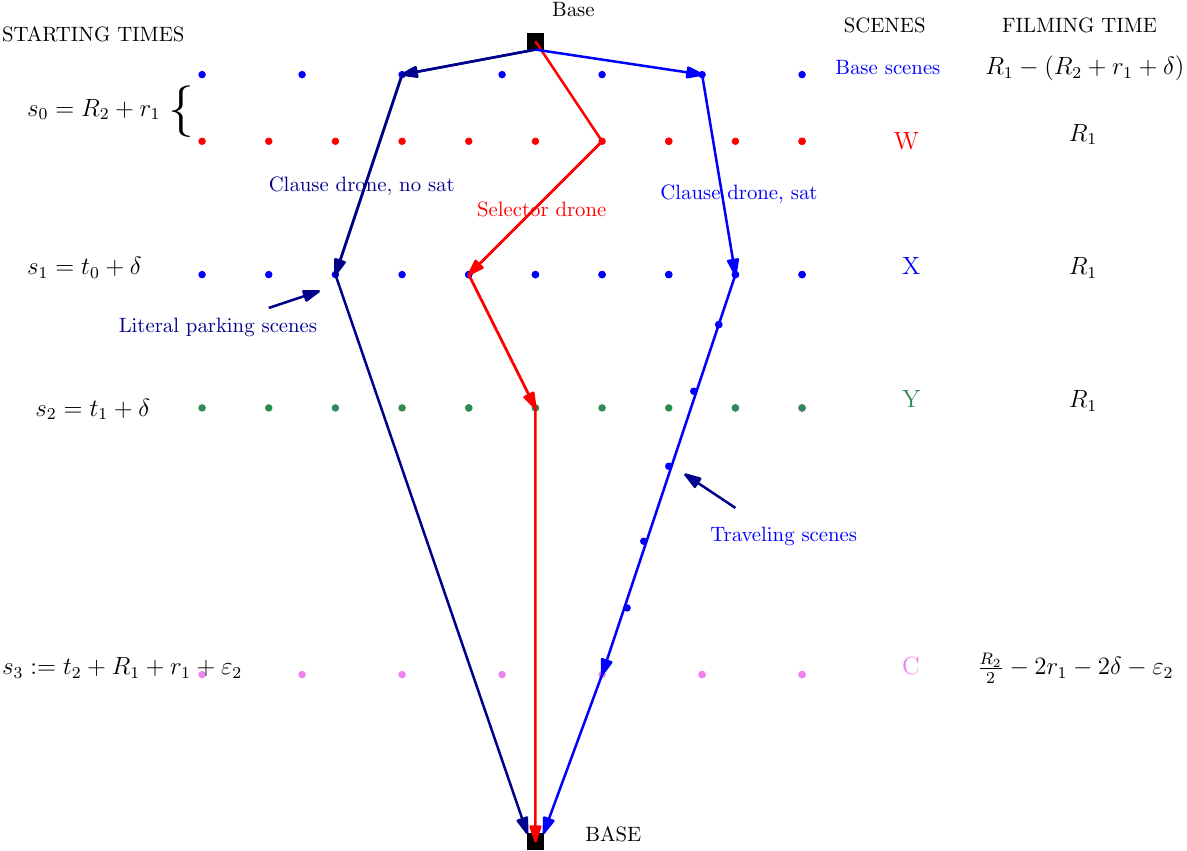}
 	\caption{Scenes and the flight plan.  }\label{fig:Filming}
 \end{figure}

\subsubsection*{$\Longrightarrow$ First, suppose that $\varphi$ is satisfiable.}
 Consider any assignment of the variables
$x_i$ that satisfies $\varphi$. We show  a flight plan of filming time equal to
\[T:=(3m+2l)R_1-\frac{l}{2}R_2-3lr_1-3l\delta+\frac{l(l-1)}{2}\varepsilon_1,\] using $k:=(m+l)$ drones with battery endurance equal to $L:=3R_1+2(R_2+r_1+\delta)$.

\subsubsection*{Flight plan}
  \begin{itemize}
      \item[$a)$] For each $w_{i,j}$ vertex, there is a drone arriving at $w_{i,j}$ at time $s_0$. These drones are called
     \emph{selector} drones; these $m$ drones film all the $W$-scenes.  The remaining drones are called \emph{clause} drones.
      This accounts for $m R_1$ filming time.

       \item[$b)$] For each $\beta_j$ vertex, there is a  clause drone arriving at $\beta_j$ at time $s_0$.
       These $l$ drones film all the base parking scenes.  At the end of their base parking scene, these drones return
       to the base.
      This accounts for $l(R_1-(R_2+r_1+\delta))$ filming time.

      \item[$c)$] For each variable $x_i$ we proceed as follows. If $x_i$ is set to ``True" then all the selector drones
      at the $x_i$-variable gadgets, after filming their $W$-scenes, move to their closest $\overline{x_{i,j}}$ vertex.
      If $x_i$ is set to ``False" then all the selector drones at the $x_i$-variable gadget, after filming their $W$-scenes, move to their closest $x_{i,j}$ vertex.
      These drones film the corresponding literal scene.
      This accounts for $m R_1$ filming time.

      \item[$d)$] After filming their corresponding literal scene, the selector drones move to their closest $Y$-vertex and film
      the corresponding $Y$ scene. This accounts for $m R_1$ filming time. After filming their corresponding $Y$-scene, the selector drones fly back to the base. 
      
    \end{itemize}
    For each clause $C_j$, pick a literal $z_j$ satisfying $C_j$. 
    \begin{itemize}
      \item[$e)$] At the indexed literal vertex corresponding to $z_j$,
      a clause drone arrives at time $s_1$ and films the literal scene. 
      Note that there is no selector drone at this vertex. This accounts for $l R_1$ filming time.

     \item[$f)$] For every $j=1,\dots,l$, the clause drone that arrived at $z_j$, stays at $z_j$ 
     from time $s_1'$ to time $s_j'$. This drones films a total
     of $(j-1)\varepsilon_1$ time, of the parking scenes at vertex $z_j$. In total summing over $j$, this amounts
     to a total of $\frac{l(l-1)}{2} \varepsilon_1$ of filming time.

    \item[$g)$]  For  every $j:=1,\dots,l$, at time $s_j'$ the clause drone at $z_j$ travels from $z_j$
   to $q_j$, in the process it films all the corresponding travelling scenes and at $q_j$ the drone
   films the clause scene. This accounts for a total of 
   $l\varepsilon_2+l\left (\frac{R_2}{2}-2r_1-2\delta-\varepsilon_2 \right)=l\left (\frac{R_2}{2}-2r_1-2\delta \right )$
   filming time. Finally, these drones fly back to the base.
  \end {itemize}

 The total filming time captured by the selector drones is equal to $3mR_1$
  and, the total filming time captured by the clause drones is equal to
  \[2lR_1-l(R_2+r_1+\delta)+\frac{l(l-1)}{2} \varepsilon_1+l\left (\frac{R_2}{2}-2r_1-2\delta \right).\]
  Thus the total filming times captured by the $k$ drones is equal to 
  \[(3m+2l)R_1-\frac{l}{2}R_2-3lr_1-3l\delta+\frac{l(l-1)}{2}\varepsilon_1=T.\]

We now elaborate on the battery. A selector drone, in the worst case, spends
\[\overset{\textrm{Go to W-vertex and film}}{\overbrace{R_2+r_1+R_1}}+\overset{\textrm{Go to literal and film}}{\overbrace{\delta+R_1}}+\overset{\textrm{Go to Y-vertex and film}}{\overbrace{\delta+R_1}}+\overset{\textrm{
Go home}}{\overbrace{R_2+r_1}}=3R_1+2(R_2+r_1+\delta)=L\]
of its battery endurance.
A clause drone spends \[\delta+R_1-(R_2+r_1+\delta)+\delta=R_1-R_2-r_1+\delta<L\]  of its battery endurance, when it first leaves the base,
films the base parking scene and returns to the base. Afterwards, in the worst case,
it spends 
\[\overset{\textrm{Go to literal}}{\overbrace{R_2+r_1}}+\overset{\textrm{Film literal}}{\overbrace{R_1}}+\overset{\textrm{
Parking scenes $\leq R_2/2$ by (\ref{R2})}}{\overbrace{3(l-1)(R_3-R_2)}}+\overset{\textrm{Go to clause}}{\overbrace{R_1+r_1}}+
\overset{\textrm{Travelling scenes}}
{\overbrace{\varepsilon_2}}+\overset{\textrm{Film clause}}{\overbrace{\frac{R_2}{2}-2r_1-2\delta-\varepsilon_2}}+
\overset{\textrm{
Go home}}{\overbrace{R_1}}\leq \]
\[3R_1+2R_2-2\delta+
<L\]
of its battery endurance.



\subsubsection*{$\Longleftarrow$ Second, let $F$ be a flight plan with filming time at least $T$ and battery endurance equal to $L$.}

We  first prove upper bounds on the total possible filming time per stages. Afterwards, we show that if a drone
misbehaves, then the flight plan fails to film this possible filming time by some amount, 
which we call a \emph{penalty}.  In the process, we refine the bounds on our parameters. 
The penalties are sufficiently large so that, combined with our upper bounds on the filming time,
ensure that $F$ behaves as above.

\vspace{.25cm}
 
We divide the scenes schedule in the following four stages. See the schedule in Figure \ref{fig:Filming}.

 \begin{itemize}
     \item \textbf{stage I}, starts at time $s_0$ and ends at time $t_0$;
     
        \item \textbf{stage II}, starts at time $s_1$ and ends at time $t_1$;
     
     \item \textbf{stage III}, starts at time $t_1$ and ends at time $t_2$;
     
     \item \textbf{stage IV}, starts at time $t_2$ and ends when the last drone returns to the base.
     \end{itemize}
 
 We have  that: 
 \begin{itemize} 
 \item $W$ and base parking scenes occur during stage I;  
 \item literal scenes occur during stage II;
 \item  literal parking scenes occur during stage III; 
  \item $Y$ and travelling scenes occur during stages III and IV; and
  \item clause scenes occur during stage IV.
 \end{itemize}
 
 \vspace{.25cm}

Our first upper bounds are immediate.
\begin{observation} \label{obs:stage_1}
The filming time of $F$ at stage I is at most 
\[(m+l) R_1 -l(R_2+r_1+\delta).\]
To achieve this bound a drone arrives at each of the $W$ and base parking vertices 
at time $s_0$; these drones do not move during Stage I.
\end{observation}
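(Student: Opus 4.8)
The plan is to derive the inequality from the elementary fact that a scene can never contribute more than its own length to the filming time, and then to read off the equality case by a pigeonhole count of drones against scenes; the only step needing any care is this equality analysis, where one has to rule out drones relocating during stage~I.

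For the bound itself, observe that for any scene $(p_i,I_i)$ and any flight plan $F$ the part of $I_i$ that $F$ captures equals $\bigl|\bigcup(I_i\cap J_i)\bigr|\le|I_i|$, the union being over all drones of $F$ that visit $p_i$. Hence the filming time accumulated by $F$ inside any time window is at most the total length of the scenes active in that window. During stage~I, that is during $[s_0,t_0]$, the active scenes are exactly the $m$ $W$-scenes --- one of length $R_1$ at each of the $m=\sum_{i=1}^n m_i$ $W$-vertices $w_{ij}$ --- together with the $l$ base parking scenes --- one of length $R_1-(R_2+r_1+\delta)$ at each base parking vertex $\beta_j$. Adding these up, the filming time of $F$ at stage~I is at most $mR_1+l\bigl(R_1-(R_2+r_1+\delta)\bigr)=(m+l)R_1-l(R_2+r_1+\delta)$.

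For the equality statement, suppose $F$ attains the bound; then each of these $m+l$ scenes is captured along the whole of its interval. Restrict attention to the sub-interval $[s_0,\,s_0+R_1-(R_2+r_1+\delta)]$, on which all $m+l$ scenes are simultaneously active. Since a drone films at most one scene at a time and $F$ uses exactly $k=m+l$ drones, full coverage of all $m+l$ scenes forces, at almost every instant of this sub-interval, exactly one drone on each of the $m+l$ vertices, and it forces every drone to be filming throughout. This assignment cannot change during the sub-interval: a drone moving between two distinct vertices spends a time interval of positive length in transit (the vertices are distinct points in general position and drones move at unit speed), and it films nothing while in transit, which would make full coverage of all $m+l$ scenes over the sub-interval impossible. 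Hence each drone is pinned to one of the $m+l$ vertices throughout the sub-interval; in particular a drone sits at each $W$-vertex and each base parking vertex at time $s_0$. For the remainder of stage~I the $m$ drones on the $W$-vertices must keep their $W$-scenes, which last until $t_0$, covered and so may be assumed to stay on those vertices, and the base parking drones, whose scenes have by then ended, may be assumed to stay put as well; in either case no movement within stage~I can add to the filming time. This is exactly the configuration in the statement, and it is the positive-transit-time argument together with the count $k=m+l$ that makes it go through.
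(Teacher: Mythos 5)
Your proof is correct and follows the same route the paper intends: the paper states this bound as ``immediate,'' meaning exactly your first paragraph --- the filming time in a window is at most the total length of the scenes active there, namely the $m$ $W$-scenes of length $R_1$ and the $l$ base parking scenes of length $R_1-(R_2+r_1+\delta)$. Your pigeonhole/positive-transit-time analysis of the equality case, using $k=m+l$, is a sound elaboration of the achievability claim that the paper leaves implicit.
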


\begin{observation} \label{obs:stage_2}
The filming time of $F$ at stage II is at most 
\[(m+l)R_1.\]
To achieve this bound a drone arrives at each of the literal vertices
at time $s_1$; these drones do not move during Stage II.
\end{observation}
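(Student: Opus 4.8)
The plan is to exploit the fact that, by item $c)$ of the scenes schedule, every literal scene starts at $s_1$ and ends at $t_1$ and has length $R_1$; hence all of them occupy exactly the common interval $[s_1,t_1]$, whose length is $t_1-s_1=R_1$. Since the only scenes present during stage~II are these literal scenes, the stage~II filming time of $F$ is $\sum_i \bigl|\bigcup_P (I_i\cap J_i)\bigr|$, the sum taken over the literal scenes $i$ and, inside, over the drones $P$ of $F$ that film scene $i$.

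First I would bound this quantity from above by the sum, over all drones of $F$, of the total time each individual drone spends filming literal scenes during $[s_1,t_1]$. This uses only that a union of sets has measure at most the sum of the measures of its members, followed by interchanging the two summations. Next I would argue that a single drone contributes at most $R_1$: at any instant a drone occupies a single point of the plane, the literal vertices are pairwise distinct points, so at each instant a drone can be filming at most one literal scene, namely the one at its current location (if any). Therefore the total time a fixed drone spends filming literal scenes cannot exceed the length of the window in which those scenes are available, which is $R_1$. As $F$ uses $k=m+l$ drones, summing over them gives the upper bound $(m+l)R_1$.

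For the equality clause I would observe that a drone positioned at a literal vertex at time $s_1$ and kept there throughout $[s_1,t_1]$ films the whole literal scene at that vertex, contributing exactly $R_1$. Placing one stationary drone at each of $m+l$ distinct literal vertices then films $m+l$ distinct literal scenes, with no overlap to be collapsed by the union, and so attains $(m+l)R_1$; this describes the tight configuration asserted in the statement.

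I do not expect a genuine obstacle here, consistent with the text calling these bounds ``immediate''. The only points requiring a line of care are the passage from the union-based definition of filming time to the per-drone sum, and the elementary observation that a drone, being at a single location at each instant, films at most one literal scene at a time, so that its stage~II contribution is capped by the length $R_1$ of the stage.
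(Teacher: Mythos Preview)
Your argument is correct and is exactly the natural elaboration of what the paper treats as immediate; the paper gives no proof for this observation beyond calling it one of the ``immediate'' upper bounds. Your only extra care---reducing the union-based filming time to a per-drone sum and noting a drone occupies one point at a time---is precisely the content the paper suppresses, and your tightness description (placing $m+l$ drones at $m+l$ distinct literal vertices, of which there are $2m>m+l$) is in fact slightly more precise than the paper's phrasing.
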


\begin{lemma} \label{lem:stage_3}
The filming time of $F$ at stage III is at most
\[m \left (t_2-t_1-\delta \right )+l(l-1)\varepsilon_1\]
\end{lemma}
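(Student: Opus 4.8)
My plan is to bound the stage III filming time by a type‑by‑type decomposition. Since the filming time is a sum over scenes of the length of the union of the portions of that scene that get filmed, the stage III filming time splits \emph{exactly} as $F_Y+F_P+F_T$, the contributions of the $Y$-scenes, the literal parking scenes, and the travelling scenes respectively, so it is enough to bound these (or pairs of them). I would first dispose of $F_T$. On a travelling trail $z=z_1,\dots,z_t=q_i$ with $z$ a literal vertex appearing in $C_i$, the scene at $z_2$ starts only at $s_i'+\varepsilon_2/t+d(z,q_i)/(t-1)$, and $d(z,q_i)\ge R_1-R_2-r_1$, while every $s_i'$ lies in $[t_1,t_2]$ with $t_2-t_1=3(l-1)(R_3-R_2)\le R_2/2$. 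Since $R_1$ was chosen huge with respect to $t$ and $R_2$, we get $d(z,q_i)/(t-1)>R_2/2\ge t_2-s_i'$, so during stage III only the \emph{first} scene of each trail can possibly be filmed — and that scene sits at the literal vertex $z$ itself and exists only for the $3(l-1)$ trails of $C_1,\dots,C_{l-1}$. Hence $F_T\le 3(l-1)\varepsilon_2/t$, which is as small as we please once $t$ is large.

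For $F_Y$: there are exactly $m=\sum_i m_i$ $Y$-vertices, the $Y$-scene at each of them starts at $s_2=t_1+\delta$, so its portion inside stage III $=[t_1,t_2]$ has length at most $t_2-t_1-\delta$; summing over the $m$ $Y$-scenes gives $F_Y\le m(t_2-t_1-\delta)$ unconditionally. Moreover the decomposition shows that at any time $\tau\in[s_2,t_2]$ the instantaneous $Y$-scene filming equals the number of $Y$-vertices occupied by a drone at time $\tau$ (all $Y$-scenes being active throughout $[s_2,t_2]$).

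The crux is $F_P$, which I would bound through a per-moment drone count. Two structural facts: \emph{(i)} inside each sub-interval $[s_i',s_{i+1}']$ the parking scenes at the $2m-3$ literal vertices not appearing in $C_i$ are synchronized and have total active duration $\varepsilon_1$, so the set $S$ of times at which some parking scene is active has total measure $(l-1)\varepsilon_1$; \emph{(ii)} at every time $\tau$, $(\#\,Y\text{-vertices occupied})+(\#\text{ literal vertices occupied})\le k=m+l$, since these two drone sets are disjoint. Combining (i), (ii) with the observation above: for $\tau\in S\cap[s_2,t_2]$ the $Y$-scene filming plus the parking filming at $\tau$ is at most $m+l$ (if at least $m$ drones sit at $Y$-vertices this is $m$ plus the at most $l$ remaining drones; otherwise it is at most the $m+l$ drones), while for $\tau\in[s_2,t_2]\setminus S$ it is at most $m$ (only $Y$-scenes are available). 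Integrating over $[s_2,t_2]$ gives $F_Y+F_P\big|_{[s_2,t_2]}\le m(t_2-t_1-\delta)+l\,\lvert S\cap[s_2,t_2]\rvert\le m(t_2-t_1-\delta)+l(l-1)\varepsilon_1$.

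It remains to absorb the filming that can still happen in the sliver $[t_1,s_2]=[t_1,t_1+\delta]$, where no $Y$-scene is yet running, together with $F_T$. The parking scenes active during this sliver form only a fraction $O(\delta/(R_3-R_2))$ of one sub-interval's worth $\varepsilon_1$, so by shrinking the gadget radii $r_1$ (hence $\delta$) relative to $R_3-R_2$ and enlarging $t$, this residual plus $F_T$ stays below the slack between the bound just obtained and the stated $l(l-1)\varepsilon_1$: indeed the true maximum of $F_P$ is strictly smaller, since with all $m$ $Y$-vertices kept occupied at most $l-1$ of the remaining $l$ drones can film parking in any single sub-interval once their literal vertices lie in distinct clauses (the forward construction itself attains only $\tfrac{l(l-1)}{2}\varepsilon_1$), leaving room of order $(l-3)\varepsilon_1$; if $l\le 3$ we duplicate clauses first to enlarge $l$. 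I expect the main obstacle to be precisely this last step: converting the heuristic ``keeping all $m$ $Y$-vertices occupied ties up $m$ drones, so at most $l$ remain for parking'' into the clean per-time inequality that produces the coefficient $l$ on $\lvert S\rvert$, and then checking that the small boundary contributions from $[t_1,t_1+\delta]$ and from the barely-reachable travelling scenes fit inside that slack, by an appropriate choice of $r_1,\delta,\varepsilon_1,\varepsilon_2,t,R_1$; the $F_Y$ and $F_T$ estimates are routine by comparison.
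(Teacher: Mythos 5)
Your per-time accounting of $F_Y+F_P$ over $[s_2,t_2]$ is sound and genuinely different from the paper's argument, but the proof as written has a real gap at exactly the step you flag: absorbing $F_T$ and the parking filming in the sliver $[t_1,t_1+\delta]$ into ``slack'' below $l(l-1)\varepsilon_1$. The first parking scene of the block $[s_1',s_2']$ starts exactly at $s_1'=t_1$, so $S\cap[t_1,s_2]$ has measure at least $\varepsilon_1/t>0$, and at time $t_1$ up to $m+l$ drones can be sitting at distinct literal vertices (that is how the stage~II bound is attained); your integral therefore really yields $m(t_2-t_1-\delta)+l(l-1)\varepsilon_1+m\,\lvert S\cap[t_1,s_2]\rvert+F_T$, an overshoot for which the lemma's exact bound leaves no room. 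The slack you invoke to cancel it --- ``with all $m$ $Y$-vertices occupied, at most $l-1$ of the remaining drones can film parking in a sub-interval once their literal vertices lie in distinct clauses'' --- presupposes structural facts about the flight plan (which drones occupy $Y$-vertices, that the leftover drones sit at literals of distinct clauses) that are only derived later in the paper, and derived \emph{using} this lemma; as stated, the bound must hold for an arbitrary flight plan. The honest way to close the gap inside your framework is a trade-off argument: a drone that films parking during $[t_1,t_1+\mu']$ reaches its $Y$-vertex only at $s_2+\mu'$ (the literal-to-$Y$ distance is exactly $\delta$ and the $Y$-scenes start at $t_1+\delta$), so the sliver gain $m\mu$ is paid for by an equal loss of $Y$-filming. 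You do not make this argument.

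For comparison, the paper avoids the boundary issue entirely by switching to a per-drone, per-sub-interval budget: it charges travelling and parking scenes \emph{jointly} against a single budget of $\varepsilon_1$ per non-$Y$ drone per interval $[s_i',s_{i+1}']$, via the convexity-type inequality $\frac{3(R_3-R_2)}{R_1-R_2}\varepsilon_2\cdot\frac{\tau_1}{3(R_3-R_2)}+\frac{\tau_2}{3(R_3-R_2)}\varepsilon_1\le\varepsilon_1$ guaranteed by (\ref{eq:epsilon}), and the union $\bigcup_i[s_i',s_{i+1}']$ is all of $[t_1,t_2]$, sliver included. Your observation that only the first travelling scene of each trail is reachable during stage~III (because $d(z,q_i)/(t-1)>R_2/2\ge t_2-s_i'$ once $R_1\gg t$) is correct and rather neat, but by isolating $F_T$ as a separate positive term instead of folding it into the $\varepsilon_1$ budget you create a second small excess that, again, the stated bound cannot absorb without the missing trade-off.
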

\begin{proof}
At this stage, some amount of $Y-$scenes can be filmed by $m$ drones. 
The total time of $Y$-scenes during stage III is equal to
    \[m\left (t_2-t_1-\delta  \right ).\] We may assume that each of these scenes is filmed by a drone
    during stage III. 
    
    Consider one of the remaining $l$ drones. By (\ref{q:one_scene}), at every interval of length
    at most $\varepsilon_2 /t$, the drone can either film a travelling scene or a literal parking scene.
    Suppose that we are at time interval $ \left [s_i',s'_{i+1} \right ]$ for some $i=1,\dots,l-1$. This interval has
    duration equal to $3(R_3-R_2)$.
    Let $z$ be a literal vertex that satisfies $C_i$, and let $\ell$ be
    the line segment with endpoints $z$ and $q_j$.
    By (\ref{q:lower bound d}), the length
    of $\ell$ is lower bounded by $R_1-R_2$, and the total filming time of the travelling scenes
    playing along $\ell$ is equal to $\varepsilon_2$. Therefore, the total filming time of the travelling scenes
    playing at $\ell$ during time interval  $ \left [s_i',s'_{i+1} \right ]$ is at most
    \[\frac{3(R_3-R_2)}{R_1-R_2} \varepsilon_2.\]
    The total filming time in parking scenes that a drone can capture during time interval  $\left [s_i',s'_{i+1} \right ]$ 
    is at most $\varepsilon_1$.
    
    Let $\tau_1$ be the time that the drone is at one such $\ell$ during  time interval
    $[s_i',s'_{i+1}]$; and let $\tau_2$ be the time that the drone is at a literal vertex not satisfying
    $C_i$ during  time interval $\left [s_i',s'_{i+1}\right ]$. We have that $\tau_1+\tau_2 \le 3(R_3-R_2)$.
    By (\ref{eq:epsilon}), during time interval  $\left [s_i',s'_{i+1}\right ]$ the drone films at most
    \[\frac{3(R_3-R_2)}{R_1-R_2} \cdot \varepsilon_2 \cdot \frac{\tau_1}{3(R_3-R_2)}+\frac{\tau_2}{3(R_3-R_2)} \cdot \varepsilon_1  \le \varepsilon_1. \]
    Therefore, during this time interval the remaining $l$ drones film at most 
    $l\varepsilon_1$ and, the result follows.
\end{proof}
\begin{lemma}\label{lem:stage_4}
    The filming time of $F$ at stage IV is at most
    \[m(R_1 -t_2+t_1+\delta )+3l\varepsilon_2+l \left (\frac{R_2}{2}-2r_1-2\delta-\varepsilon_2 \right ).\]
\end{lemma}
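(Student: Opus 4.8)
The plan is to prove Lemma~\ref{lem:stage_4} as a pure scene-budget estimate. For any time $a$, the filming time that a flight plan accumulates after time $a$ is at most $\sum_i |I_i \cap [a,\infty)|$, the sum being over all scenes $i$, since each scene $(p_i,I_i)$ contributes to the filming time at most the length of the part of $I_i$ after $a$, irrespective of how many of the $k$ drones visit $p_i$. Taking $a = t_2$ (the start of stage IV), it therefore suffices to sum the lengths of the portions of all scenes that reach into stage IV.

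First I would determine which scenes overlap $[t_2,\infty)$ by reading off the schedule (see Figure~\ref{fig:Filming}). The $W$-scenes and base scenes lie in stage I and end at $t_0 < t_2$; the literal scenes lie in stage II and end at $t_1 < t_2$; and the literal parking scenes lie in stage III, their last copies ending exactly at $s_l' = t_2$. Hence the only scenes contributing to stage IV are the $Y$-scenes, the travelling scenes, and the clause scenes. Now I would compute the three contributions. Each $Y$-scene is the interval $[s_2,\, s_2 + R_1] = [t_1+\delta,\, t_1+\delta+R_1]$; since the $Y$-scenes overlap stage III we have $t_1+\delta = s_2 < t_2$, and since $t_2 \le t_1 + R_2/2$ by~(\ref{eq:t2}) while $R_1$ is large relative to $R_2$ we have $t_2 < t_1+\delta+R_1$, so $t_2$ lies strictly inside this interval and the portion of each $Y$-scene in stage IV has length $R_1 - (t_2 - t_1 - \delta)$; there are $m$ of them, giving $m(R_1 - t_2 + t_1 + \delta)$. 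For the travelling scenes, each clause $C_j$ has three literals, and along the segment from each such literal to $q_j$ the travelling scenes have total length $\varepsilon_2$; thus their total length over all clauses is $3l\varepsilon_2$, which bounds the amount filmable from them during stage IV. Finally, each clause scene starts at $s_3 = t_2 + R_1 + r_1 + \varepsilon_2 > t_2$, hence lies entirely in stage IV, and has length $\frac{R_2}{2} - 2r_1 - 2\delta - \varepsilon_2$; there are $l$ of them. Adding the three contributions gives the bound in the statement, and the lemma follows.

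I do not expect a genuine obstacle here. The points that need care are the bookkeeping check that the list of scenes intruding into stage IV is complete, the verification that $t_2$ lies strictly between the start and the end of the $Y$-scenes (so that each $Y$-vertex contributes $R_1 - (t_2 - t_1 - \delta)$ rather than the full $R_1$), and the observation that $s_3 > t_2$ so that clause scenes are counted in full. Unlike Lemma~\ref{lem:stage_3}, no reasoning about how a drone splits its time between competing scenes is needed, because the scene-budget bound is valid for any number of drones; the deliberately loose gap between this bound and the target $T$ is what the subsequent penalty argument exploits.
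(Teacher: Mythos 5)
Your proof is correct and takes essentially the same route as the paper's: both bound the stage-IV filming time by the total length of the scene portions lying in $[t_2,\infty)$, namely $m(R_1-t_2+t_1+\delta)$ from the $Y$-scenes, $3l\varepsilon_2$ from the travelling scenes (three literals per clause, each contributing at most $\varepsilon_2$), and $l\left(\frac{R_2}{2}-2r_1-2\delta-\varepsilon_2\right)$ from the clause scenes. Your version merely spells out the bookkeeping (which scenes reach past $t_2$, and that $t_2$ lies inside the $Y$-scene intervals) that the paper leaves implicit.
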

\begin{proof}
    Each of $Y$-scene has a length $R_1 -t_2+t_1+\delta $ during stage IV. There are $m$ of these scenes. 
    The travelling scenes
    amount to $3 l\varepsilon_2$ time (at most 3 drones arrive to each clause). On the other hand, the clause scenes amount to $l \left (\frac{R_2}{2}-2r_1-2\delta-\varepsilon_2 \right )$ time. The result follows. 
\end{proof}

We point out that the upper bounds in Lemmas~\ref{lem:stage_3} and~\ref{lem:stage_4} are actually not achievable, 
but the bounds are enough for our purposes. We have the following general upper bound. 
Let $T'$ be the filming time of $F$.
\begin{corollary}\label{cor:upper bound}
$T'$ is at most
\[ (3m+2l)R_1-\frac{l}{2}R_2-3lr_1-3l\delta+2l\varepsilon_2+l(l-1)\varepsilon_1.\]
\end{corollary}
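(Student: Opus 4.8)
The plan is to add up the four stagewise bounds already in hand. First I would check that the four stages decompose the relevant part of the time axis: they are pairwise disjoint intervals, and every scene of $\mathcal{F}$ has its time interval contained in $\text{stage I}\cup\text{stage II}\cup\text{stage III}\cup\text{stage IV}$ — the $W$-scenes and base scenes lie in stage~I, the literal scenes in stage~II, the literal parking scenes in stage~III, the $Y$-scenes and travelling scenes in stages~III and~IV, and the clause scenes in stage~IV. The only gap between consecutive stages is the interval $[t_0,s_1]$ of length $\delta$, and no scene is active there (the $W$- and base scenes end at or before $t_0$, and the literal scenes start at $s_1$). Since the filming time of $F$ is by definition a sum of lengths of unions of time intervals, and the length of a union of intervals is additive under intersecting with the members of a partition of the ambient interval, we obtain $T' = T'_{\mathrm{I}}+T'_{\mathrm{II}}+T'_{\mathrm{III}}+T'_{\mathrm{IV}}$, where $T'_X$ denotes the filming time of $F$ during stage $X$.

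Next I would substitute the bounds: $T'_{\mathrm{I}}\le (m+l)R_1 - l(R_2+r_1+\delta)$ by Observation~\ref{obs:stage_1}; $T'_{\mathrm{II}}\le (m+l)R_1$ by Observation~\ref{obs:stage_2}; $T'_{\mathrm{III}}\le m(t_2-t_1-\delta)+l(l-1)\varepsilon_1$ by Lemma~\ref{lem:stage_3}; and $T'_{\mathrm{IV}}\le m(R_1-t_2+t_1+\delta)+3l\varepsilon_2+l\bigl(\tfrac{R_2}{2}-2r_1-2\delta-\varepsilon_2\bigr)$ by Lemma~\ref{lem:stage_4}. Adding, the terms $m(t_2-t_1-\delta)$ and $m(R_1-t_2+t_1+\delta)$ combine to $mR_1$, so the unknown times $t_1,t_2$ cancel; the $R_1$-coefficients total $(m+l)+(m+l)+m=3m+2l$; the $R_2$-terms give $-lR_2+\tfrac{l}{2}R_2=-\tfrac{l}{2}R_2$; the $r_1$-terms give $-lr_1-2lr_1=-3lr_1$; the $\delta$-terms give $-l\delta-2l\delta=-3l\delta$; the $\varepsilon_2$-terms give $3l\varepsilon_2-l\varepsilon_2=2l\varepsilon_2$; and the $\varepsilon_1$-term is $l(l-1)\varepsilon_1$. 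This is exactly the claimed bound.

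There is no genuine obstacle here: all the substance lies in Observations~\ref{obs:stage_1}--\ref{obs:stage_2} and Lemmas~\ref{lem:stage_3}--\ref{lem:stage_4}, and the present step is pure bookkeeping. The single point worth a sentence of justification is the additivity $T'=T'_{\mathrm{I}}+\dots+T'_{\mathrm{IV}}$: this is immediate from the definition of filming time once one notes that no scene lies outside the union of the four stages and that the gap $[t_0,s_1]$ contains no scene, so that passing from $T'$ to the sum of the $T'_X$ neither loses nor double-counts any filming time.
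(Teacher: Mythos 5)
Your proposal is correct and is exactly the argument the paper intends: the corollary is stated without proof precisely because it is the sum of the bounds in Observations~\ref{obs:stage_1} and~\ref{obs:stage_2} and Lemmas~\ref{lem:stage_3} and~\ref{lem:stage_4}, and your arithmetic (including the cancellation of $t_1,t_2$ across stages III and IV) checks out. Your extra remark justifying the additivity $T'=T'_{\mathrm{I}}+\dots+T'_{\mathrm{IV}}$ — that the stages together with the empty gap $[t_0,s_1]$ cover all scene intervals — is a sensible point that the paper leaves implicit.
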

We have that \[T'-T \le 2l\varepsilon_2+\frac{l(l-1)}{2}\varepsilon_1.\]
Suppose that $R_1$ is sufficiently large with respect to $R_2$ and $l$ so that
\begin{equation} \label{q:R_1 diff}
    R_1-R_2-r_1>T'-T \le 2l\varepsilon_2+\frac{l(l-1)}{2}\varepsilon_1.
\end{equation}
By (\ref{q:R_1 diff}): every $W$-vertex and base parking vertex is visited at least
once by a drone during stage I. Moreover, we may assume that the first drone
to visit a given $W$-vertex arrived at time,
\[s_0+T'-T\le s_0+2l\varepsilon_2+\frac{l(l-1)}{2}\varepsilon_1,\] 
at the latest.
A drone that is the first to visit a given $W$-vertex during Stage I is called
a \emph{selector} drone;
a drone that is not a selector drone is  called \emph{clause} drone. There are exactly $m$ selector drones and $l$ clause drones.

If during Stage $I$, after arriving at its $W$-vertex, a selector drone flies back to the base, then this drone
films at most \[R_1-2(R_2-r_1)\]
of the $W$-scenes. If during Stage $I$, a clause drone flies to a $W$-vertex, then this drone
films at most \[R_1-(R_2+r_1+\delta)-2(R_2-r_1-\delta)\]
of the base parking scenes. We choose $R_2$ sufficiently large with respect to
$r_1$ and $l$ so that 
\begin{equation} \label{q:R_2 diff}
    2(R_2-r_1-\delta)>T'-T \le 2l\varepsilon_2+\frac{l(l-1)}{2}\varepsilon_1.
\end{equation}
By (\ref{q:R_2 diff}) we may assume that neither of these events occur. In particular

\ruyquote{at the beginning of Stage II, no selector drone has flown back to the base. \label{q:no going back II}}
\noindent By similar arguments, we have that 

\ruyquote{at the beginning of Stage III, no selector drone has flown back to the base. \label{q:no going back III}}

In particular, the selector drones have at most 
\[R_1-3 R_2+2r_1+2l \varepsilon_2+\frac{l(l-1)}{2}\varepsilon_1\]
battery endurance left at the beginning of Stage III.
Note that the travel time from a literal vertex to a clause vertex and then to the base
is at least \[2R_1-R_2-r_1.\]
This implies that

\ruyquote{a selector drone cannot film any part of a clause scene unless it flies to the base
after the beginning of Stage III.}

Suppose that at the beginning of Stage III (time $t_1$) a selector drone flies to the base
and then to a clause vertex. Its earliest time of arrival at the clause vertex
would be at least \[t_1+R_2-r_1+R_1.\] Notice that the ending time of the clause scenes
is equal to 
\[t_2+R_1+r_1+\varepsilon_2+\frac{R_2}{2}-2r_1-2\delta-\varepsilon_2 \le t_1+R_1+R_2-r_1-2\delta < t_1+R_2-r_1+R_1,\]
using (\ref{eq:t2}).
Therefore, 

\ruyquote{a selector drone cannot film any part of a clause scene.}
Also notice that a drone leaving from a $Y$-vertex at time $t_3$ towards a clause vertex, arrives
at time \[t_1+R_1+\delta+R_1-R_2=2R_1+\delta-R_2,\]
at the earliest.
Moreover, every $Y$-scene must be filmed for at least 
\[R_1-(T'-T) \ge R_1-2l\varepsilon_2-\frac{l(l-1)}{2}\varepsilon_1\] time.
Thus, we may assume that

\ruyquote{$Y$-scenes are only filmed by 
selector drones. \label{q:only Y}}
By (\ref{q:Y faraway}) and (\ref{q:only Y}), at time $t_0$, in every clause gadget corresponding 
to $x_i$, either all selector drones fly to their closest $x_{ij}$ vertex, or all
selector drones fly to their closest $\overline{x}_{ij}$ vertex. In the first case, we say that $x_i$ has been set to "False" and in the latter case,
we say that $x_i$ has been set to "True". Afterwards, they fly to their closest $Y$-vertex.
Thus, at time $s_1$, the clause drones arrive to literal vertices corresponding to their
truth value assignment. 

We elaborate now on the remaining drones. The goal is to show that different drones do not share travelling scenes.

Let $1 \le i <j \le l$. We choose $\varepsilon_2$ sufficiently small with respect to $R_3-R_2$ and $D_2''$ sufficiently small so that 

\ruyquote{ if a drone
films a part of a travelling scene corresponding to clause $C_j$, then it cannot
film a later travelling scene corresponding to clause $C_i$. \label{q:no past}}

Let $z$ be any literal vertex satisfying $C_j$.
We now choose $R_1$ sufficiently large, $R_3$ sufficiently close to $R_2$ and, $D_2''$ sufficiently small so that there exists
a distance $d^\ast$ that satisfies the following for all such $i,j$ and $z$. 


\ruyquote{If a drone films a travelling scene contained in the line passing through $z$ and $q_i$, and
at distance at least $d^\ast$ from $z$, then it cannot film any later travelling scene corresponding to
$C_j$. \label{q:max zigzag}}
Suppose that a drone leaves the literal vertex $z$ at time $s_i'$, and it eventually films
the clause scene at $q_j$. In the process, this drone films some travelling scenes. By (\ref{q:no past}) there exists 
$i=i_1 < i_2 < \dots,i_{l'} =j$, such that the drone films sequentially some travelling scenes corresponding to clauses 
$C_{i_1},\dots,C_{i_l'}$. Let $d_{s}$ be the minimum of the distances from the literal vertices satisfying $C_{i_s}$ to $q_{i_s}$. 
By (\ref{q:max zigzag}), the drone films at most 
\[\left (\sum_{s=1}^{l'-1} \frac{d^\ast}{d_{s}} \varepsilon_2 \right ) +\varepsilon_2 \le \frac{l d^\ast}{R_1-R_2}\varepsilon_2+\varepsilon_2. \]
We choose $R_1$ sufficiently large with respect to $R_2$ so that 
\begin{equation}
     \frac{l d^\ast}{R_1-R_2}\varepsilon_2 < \frac{\varepsilon_1}{l}. \label{q:no cheating}
\end{equation}

By (\ref{q:only Y}) and (\ref{q:no cheating}), we may update our upper bound on the filming of $F$ so that
\begin{equation} T'-T \le \frac{l(l-1)}{2}\varepsilon_1+\varepsilon_1. \label{q:diff2} \end{equation}
We choose $\varepsilon_2$ sufficiently large with respect to $\varepsilon_1$
so that \begin{equation} \varepsilon_2 > l^2 \varepsilon_1. \label{q:e1 vs e2} \end{equation}
We may assume that every clause scene is filmed by one clause drone.  
By (\ref{q:no past}), (\ref{q:no cheating}),  (\ref{q:diff2}) and (\ref{q:e1 vs e2}), we have that 

\ruyquote{the drone that films the clause scene at $q_i$ left its literal vertex at time $s_i'$ or earlier.}
\noindent 
Thus, the total amount of filming time due to literal parking scenes filmed by the  drone
that films the clause scene at $q_i$ is at most 
\[(i-1)\varepsilon_1.\]
In particular, the total amount of filming time due to literal parking scenes is upper bounded by
\[\sum_{i=1}^{l} (i-1) \varepsilon_1=\frac{l(l-1)}{2} \varepsilon_1.\]
Thus, we may further update our upper bound on $T'$ so that
\[T'-T \le \varepsilon_1.\]
This implies that the drone filming the clause scene at $q_i$ leaves its literal vertex
at time $s_i'.$ Thus, we have that 
$T=T'.$ Moreover, if this literal vertex does not satisfy $C_i$, then $T'< T$, since
it films less than $\varepsilon_2$ of the travelling scene corresponding to $q_i$. This completes the proof. 


	\section*{Acknowledgments}

This work is partially supported by grants PID2020-114154RB-I00 and TED2021-129182B-I00, funded by MCIN/AEI/10.13039/501100011033 and the European Union NextGenerationEU/PRTR.


\bibliographystyle{abbrv}
\bibliography{cycling}







\end{document}